\newtheorem{lemma}{Lemma}
\newtheoremstyle{mainstyle}
  {\topsep} 
  {\topsep} 
  {} 
  {} 
  {\bfseries} 
  {.} 
  {.25em} 
  {} 
\theoremstyle{mainstyle}
\newtheorem{definition}{Definition}
\newtheorem{proposition}{Proposition}
\newtheorem{myassumption}{Assumption} 
\DeclareMathAlphabet{\mathcal}{OMS}{cmsy}{m}{n}
\DeclarePairedDelimiter{\abs}{\lvert}{\rvert}
\newcommand{\diag}{\text{diag}}
\newcommand{\zono}[1]{\langle #1 \rangle}
\def\eqref#1{(\ref{#1})}
\def\eqnref#1{(\ref{#1})}
\titlespacing{\section}{0pt}{0.5ex}{0.5ex}
\titlespacing{\subsection}{0pt}{0.3ex}{0.3ex}
\titlespacing{\subsubsection}{0pt}{0.1ex}{0.1ex}
\def\colsep{\arraycolsep}
\def\t{\mathtt{T}}
\def\rank{\text{rank}}
\def\ol{\overline}
\let\OLDthebibliography\thebibliography
\renewcommand\thebibliography[1]{
  \OLDthebibliography{#1}
  \setlength{\parskip}{-1pt}
  \setlength{\itemsep}{-1pt plus 0.3ex}
}
\title{\LARGE \bf Data-driven Set-based Estimation of Polynomial Systems \\ with Application to SIR Epidemics
}
\author{Amr Alanwar$^{*,1}$, Muhammad Umar B. Niazi$^{*,2}$, and Karl H. Johansson$^{2}$
\thanks{$^*$Authors with equal contributions and ordered alphabetically.}
\thanks{$^1$Department of Computer Science \& Electrical Engineering, Jacobs University, Bremen, Germany (Email: {a.alanwar@jacobs-university.de}). $^2$Division of Decision and Control Systems and Digital Futures, EECS, KTH Royal Institute of Technology, Sweden (Emails: \{{mubniazi, kallej}\}{@kth.se}).}}
\begin{document}

\maketitle

\begin{abstract}
This paper proposes a data-driven set-based estimation algorithm for a class of nonlinear systems with polynomial nonlinearities. Using the system's input-output data, the proposed method computes a set that guarantees the inclusion of the system's state in real-time. Although the system is assumed to be a polynomial type, the exact polynomial functions, and their coefficients are assumed to be unknown. To this end, the estimator relies on offline and online phases. The offline phase utilizes past input-output data to estimate a set of possible coefficients of the polynomial system. Then, using this estimated set of coefficients and the side information about the system, the online phase provides a set estimate of the state. Finally, the proposed methodology is evaluated through its application on SIR (Susceptible, Infected, Recovered) epidemic model.
\end{abstract}
\section{Introduction}

For monitoring and control of dynamical systems, the knowledge of the state is undeniably crucial. Typically, observer design techniques rely on the system's model for estimating the state in real-time. However, as modern engineering systems are becoming increasingly complex, developing accurate models to describe a system's behavior is quite challenging \cite{conf:deepctherom, huang2021robust}. This motivates the development of a data-driven approach for set-based state estimation.

Given bounded uncertainties and measurement noise, estimating a set that guarantees the inclusion of the true system state at each time step is a classical problem studied in \cite{conf:setmem1971}. 
Several set-based estimation approaches have been presented in the literature. The interval observers \cite{efimov2013, conf:combastelinterval2, conf:combastelinterval1} utilize two observers to estimate the upper and lower bounds of the state trajectory and rely on the monotonicity properties of the estimation error dynamics to ensure that the true state remains inside the estimated bounds. Secondly, the set-membership observers \cite{conf:orthotope,conf:ierardi_setmem,conf:ourdiffusion} intersect the state-space regions consistent with the model with those obtained from the online measurements to estimate the current state set. Finally, zonotopic filtering \cite{alamo2005, le2013, rego2020, rego2021} provides set-based state estimation using interval arithmetic and zonotopes under bounded uncertainties in the model.
See \cite{conf:AlthoffJagatjournal, paula2022} for a comprehensive review of the existing set-based estimators.

The limitation of the existing methods for set-based estimation is their assumption of an a priori known model. Such a model is not available in many applications or too costly to develop and identify. Furthermore, relying on inaccurate models may violate the formal guarantees of the system.
The data-driven paradigm is therefore gaining precedence over the model-based paradigm because of the possibility to obtain huge amounts of data from the system thanks to the advancements in sensor technologies. Recently, several studies are dedicated to data-driven reachability analysis~\cite{conf:LearningDensity, conf:onthefly,  conf:murat_christoffel, conf:activelearning1, conf:uncertain, conf:stochasticreach, conf:koopmanblack, conf:l4dc,conf:zpc} that overcome the limitation of prior model knowledge. However, to the best of our knowledge, only one work presented a set-based estimation technique for linear systems \cite{conf:data_driven_set_based} given the offline data and online measurements. 

In this paper, we propose a set-based estimator for nonlinear polynomial systems by extending \cite{conf:data_driven_set_based}. We provide formal guarantees for the proposed data-driven set-based estimation method and show its effectiveness on the application of compartmental SIR (Susceptible, Infected, Recovered) epidemic process. The set-based estimation for epidemics is essential as it provides formal guarantees on the bounds of the infected population in the presence of uncertainties and discrepancies in the data. The existing set-based estimators \cite{bliman2016, degue2020, efimov2021} for compartmental epidemics rely on a model and a priori knowledge of some of its parameters. In contrast, our method does not assume such an a priori knowledge. However, if the system's model structure is known, our method can incorporate it as a side information to further refine the set-based estimation results.


\section{Preliminaries and Problem Statement} 
\label{sec:preliminaries}


\subsection{Notations and Set Representations}

The set of real and natural numbers are denoted as $\mathbb{R}$ and $\mathbb{N}$, respectively, and $\mathbb{N}_0 = \mathbb{N}\cup \{0\}$. The transpose and Moore-Penrose pseudoinverse of a matrix $X$ are denoted as $X^\t$ and $X^\dagger$, respectively.
If $X\in\mathbb{R}^{n\times m}$ is full-row rank, then $X^\dagger = X^\t (XX^\t)^{-1}$ is the right inverse, i.e., $XX^\dagger = I_n$, where $I_n\in\mathbb{R}^{n\times n}$ is the identity matrix. 
The standard unit vector $e_i$ is the $i$-th column of $I_n$, for an appropriate $n$, and $E_i = \diag(e_i)$. The $i$-th element of a vector or list $A$ is denoted by $A^{(i)}$. 

\begin{definition}(Zonotope \cite{conf:zono1998})
\label{def:zonotope} 
Given a center $c \in \mathbb{R}^n$ and $\xi \in \mathbb{N}$ generator vectors $g^{(1)}, \dots, g^{(\xi)}\in\mathbb{R}^n$, a zonotope is the set
    \begin{equation*}
      \mathcal{Z} = \Big\{ x \in \mathbb{R}^n \,\big|\, x = c + \sum_{i=1}^\xi \beta_i \, g^{(i)} \, ,
      -1 \leq \beta_i \leq 1 \Big\}
    \end{equation*} 
denoted as $\mathcal{Z} = \zono{c,G}$ with $G=[\colsep2pt\begin{array}{ccc} g^{(1)} & \dots & g^{(\xi)}\end{array}]$. 
\end{definition}

A linear map $L:\mathbb{R}^{n}\rightarrow\mathbb{R}^m$ applied to a zonotope $\mathcal{Z}$ yields $L\mathcal{Z}=\zono{Lc,LG}$. Given two zonotopes $\mathcal{Z}_1=\zono{c_1,G_1}$ and $\mathcal{Z}_2=\zono{c_2,G_2}$ that are subsets of $\mathbb{R}^n$, their Minkowski sum is given by
$
\mathcal{Z}_1+\mathcal{Z}_2 = \zono{c_1+c_2,[\colsep=2pt\begin{array}{cc} G_1 & G_2 \end{array}]}.
$
Note that $\mathcal{Z}_1 - \mathcal{Z}_2$ means $\mathcal{Z}_1 + (-1 \mathcal{Z}_2)$. 
  
\begin{definition}(Constrained zonotope \cite{conf:const_zono}) \label{df:contzono}
Given a center $\ol{c} \in \mathbb{R}^n$ and $\xi \in \mathbb{N}$ generator vectors $\ol{g}^{(1)}, \dots, \ol{g}^{(\xi)}\in\mathbb{R}^n$, and constraints $A \in $ $\mathbb{R}^{n_c \times 
\xi}$ and $b \in \mathbb{R}^{n_c}$, a constrained zonotope is the set
\begin{equation}\label{eq:conszono}
    \mathcal{C} {=} \Big\{ x\in\mathbb{R}^n \,\big|\, x=\ol{c}+\sum_{i=1}^\xi \beta_i \ \ol{g}^{(i)}, \ A \beta=b,\nonumber \\ 
    -1 \leq \beta_i \leq 1 \Big\}
\end{equation}
denoted as $\mathcal{C} = \zono{\ol{c},\ol{G},A,b}$ with $\ol{G}=[\colsep2pt\begin{array}{ccc} \ol{g}^{(1)} & \dots & \ol{g}^{(\xi)}\end{array}]$. 
\end{definition}

\begin{definition}(Matrix zonotope \cite[p.52]{conf:thesisalthoff}) \label{def:mat_zonotope}
Given a center $C \in \mathbb{R}^{n\times k}$ and $\xi \in \mathbb{N}$ generator matrices ${G}^{(1)},\dots,G^{(\xi)} \in \mathbb{R}^{n \times k}$, a matrix zonotope is the set
  \begin{equation*}
    \mathcal{M} = 
    \Big\{   X \in \mathbb{R}^{n\times k} 
             \,\big|\,
             X = C + \sum_{i=1}^\xi {\beta}_i \, {G}^{(i)} \, 
             ,
             -1 \leq {\beta}_i \leq 1 
    \Big\}
  \end{equation*}
denoted as $\mathcal{M} = \zono{C,\mathcal{G}}$ with $\mathcal{G}=\{ {G}^{(1)}, \dots, {G}^{(\xi)} \}$.
\end{definition}

A linear map $L:\mathbb{R}^n \rightarrow \mathbb{R}^m$ applied to a matrix zonotope $\mathcal{M}=\zono{C,\mathcal{G}}$ yields $L\mathcal{M} = \zono{LC,L\mathcal{G}}$. Given two matrix zonotopes $\mathcal{M}_1=\zono{C_1,\mathcal{G}_1}$ and $\mathcal{M}_2=\zono{C_2,\mathcal{G}_2}$ that are subsets of $\mathbb{R}^{n\times k}$, their Minkowski sum is given by
$
\mathcal{M}_1 + \mathcal{M}_2 = \zono{C_1+C_2,\{\mathcal{G}_1,\mathcal{G}_2\}}.
$

\begin{definition}(Constrained matrix zonotope \cite{conf:ourjournal}) \label{def:conmatzonotopes}  
Given a center $\overline{C} \in \mathbb{R}^{n\times k}$, generator matrices $\overline{G}^{(1)}, \dots, \overline{G}^{(\xi)}\in \mathbb{R}^{n \times k }$, and constraints $A^{(1)}, \dots , A^{(\xi)} \in \mathbb{R}^{n_c \times n_a}$ and $B \in \mathbb{R}^{n_c \times n_a}$, a constrained matrix zonotope is the set
\[\arraycolsep=1pt\begin{array}{cl}
	\mathcal{N} = \Big\{ X \in \mathbb{R}^{n\times p} \,\big|\,& X = \overline{C} + \sum_{i=1}^{\xi} \beta_i \, \overline{G}^{(i)}, ~\text{where} \\
	& \sum_{i=1}^{\xi} \beta_i A^{(i)} = B \,,  -1 \leq \beta_i \leq 1 \Big\}
\end{array}\]
denoted as $\mathcal{N} = \zono{\overline{C}, \overline{\mathcal{G}}, \mathcal{A}, B}$ with $\overline{\mathcal{G}}=\{\overline{G}^{(1)},\dots,\overline{G}^{(\xi)}\}$ and $\mathcal{A}=\{A^{(1)},\dots,A^{(\xi)}\}.$
\end{definition}

\begin{definition}(Interval matrix \cite[p. 42]{conf:thesisalthoff}) \label{def:intmat}
An interval matrix $\mathcal{I} = \left[\, \underline{I},\overline{I} \, \right]$ has intervals as its entries, where the left and right limits $\underline{I},\overline{I}\in \mathbb{R}^{n \times k}$ are such that $\underline{I}\leq \overline{I}$ element-wise.
\end{definition} 

To over-approximate a zonotope $\mathcal{Z}=\zono{c,[\colsep=1pt\begin{array}{ccc} g^{(1)} & \dots & g^{(\xi)}\end{array}]}$ by an interval $\mathcal{I} = [\,\underline{i},\bar{i}\,]$, we write $\mathcal{I}=\text{int}(\mathcal{Z})$ that is computed as
\begin{equation}\label{eq:zono2int}
    \bar{i} = c + \sum_{i=1}^{\xi} \abs{g^{(i)}}, \quad
    \underline{i} = c -  \sum_{i=1}^{\xi} \abs{g^{(i)}}. 
\end{equation}
We compute the inverse of an interval matrix by following \cite[Theorem 2.40]{conf:inverseInterval}, but other types of inverses provided in \cite{conf:inverseIntervalSurvey} can also be used. The pseudoinverse of an interval matrix $\mathcal{I}$ will also be denoted as $\mathcal{I}^\dagger$. We denote the interval vector (column) $i$ of an interval matrix $\mathcal{I}$ by $\mathcal{I}(:,i)$.

\subsection{Polynomial Systems}
Consider a discrete-time nonlinear system
\begin{subequations}\label{eq:sysnonlin_poly}
\begin{align}
    x(k+1) &= f_p(x(k),u(k))+ w(k) \label{eq:underlying_system} \\
    y(k) &= H x(k) + v(k) \label{eq:observations}
\end{align}
\end{subequations}
where $f_p:\mathbb{R}^{n_x}\times\mathbb{R}^{n_u} \rightarrow \mathbb{R}^{n_x}$ is a polynomial nonlinearity with $p\in\mathbb{R}^{n_p}$ the vector of parameters,
$w(k) \in \mathcal{Z}_w \subset \mathbb{R}^{n_x}$ is the process noise bounded by the zonotope $\mathcal{Z}_w$, ${u(k) \in \mathbb{R}^{n_u}}$ is the known input, $y(k)  \in \mathbb{R}^{n_y}$ is the output measured by sensors with $n_y\leq n_x$, and $v(k) \in \mathcal{Z}_{v} \subset \mathbb{R}^{n_y}$ is the measurement noise of sensors bounded by the zonotope $\mathcal{Z}_v$.
Without loss of generality, we assume that the output matrix $H\in\mathbb{R}^{n_y\times n_x}$ is full-row rank, i.e., $\rank(H)=n_y$. Moreover, the initial state $x(0)\in\mathcal{X}_0$, for some known $\mathcal{X}_0\subset\mathbb{R}^{n_x}$, and the system \eqref{eq:sysnonlin_poly} is assumed to satisfy the observability rank condition in the sense of \cite{albertini2002, hanba2009}.

In the interest of clarity, we will sometimes omit $k$ as the argument of signal variables, however, the dependence on $k$ should be understood implicitly. 
Let $n=n_x+n_u$ and
$
\zeta(k)=[\colsep=2pt\begin{array}{cc} x(k)^\t & u(k)^\t \end{array}]^\t=[\colsep=2pt\begin{array}{ccc} \zeta_1(k) & \dots & \zeta_n(k) \end{array}]^\t \in \mathbb{R}^n.
$
By a polynomial system, we mean that $f_p(\zeta) \in \mathbb{R}[\zeta]^{n_x}$ is a polynomial nonlinearity, where $\mathbb{R}[\zeta]^{n_x}$ is an $n_x$-dimensional vector with entries in $\mathbb{R}[\zeta]$, which is the set of all polynomials in the variables $\zeta_1,\dots,\zeta_n$ of some degree $d>0$ given by
\begin{align*}
     f_p^{(i)}(\zeta)  = \sum_{j=1}^m \theta_j \zeta_1^{\alpha_{j,1}} \zeta_2^{\alpha_{j,2}} \dots \zeta_n^{\alpha_{j,n}} 
\end{align*}
with $m$ the number of terms in $f_p^{(i)}(\zeta)$, $\theta_j \in \mathbb{R}$ the coefficients, and $\alpha_j = [\colsep=2pt\begin{array}{ccc} \alpha_{j,1} & \dots & \alpha_{j,n} \end{array}]^\t\in\mathbb{N}_0^{n}$ the vectors of exponents with $\sum_{i=1}^n \alpha_{j,i} \leq d$, for every $j\in\{1,\dots,m\}$. 


\subsection{Problem Statement}

Given the input vector $u(k)$, output vector $y(k)$, output matrix $H$, and noise zonotopes $\mathcal{Z}_w$ and $\mathcal{Z}_v$, our main goal is to obtain a set-based estimate $\hat{\mathcal{R}}_k$ that guarantees the inclusion of the true state, i.e., $x(k)\in\hat{\mathcal{R}}_k$ given that $n_x$ is known. Also, we aim to estimate the set of possible coefficients of the polynomial function $f_p(x(k),u(k))$. 

\section{Set-based Estimation Algorithm}\label{sec:alg}

We can write $f_p(\zeta)$ as follows (see \cite{conf:ourjournal} and \cite{conf:polyDissipat})
\begin{align}
    f_p(\zeta) &{=} \Theta_p \, h(\zeta) 
    \label{eq:pcg}
\end{align}
where 
$h(\zeta) \in \mathbb{R}[\zeta]^{m_a}$ contains at least all the monomials present in $f_p(\zeta)$. These monomials can be included in $h(\zeta)$ if, for instance, the upper bound on the degree of polynomials in $f_p(\zeta)$ is known. Moreover, if the polynomial function $f_p(\zeta)$ is known, then $h(\zeta)$ contains all the monomials of $f_p(\zeta)$. The matrix ${\Theta_p \in \mathbb{R}^{n_x \times m_a}}$ contains the unknown coefficients of the monomials in $h(\zeta)$.

The proposed set-based estimator consists of two phases: offline and online, which are detailed below.

\subsection{Offline Phase} \label{sec:offline}

In this subsection, we consider an offline phase, where it is assumed that an experiment is conducted and the data on the input $u(k)$ and the output $z(k)$ trajectories is collected for $k=0,1,\dots,T$, where 
\begin{align}
  z(k)  &= H x(k) + \gamma(k)
  \label{eq:training_measurements}
\end{align}
  with $\gamma(k)\in \mathcal{Z}_{\gamma} \subset \mathbb{R}^{n_y}$ representing the noise
  bounded by the zonotope $\mathcal{Z}_{\gamma} = \zono{c_{\gamma},G_{\gamma}}$.
Notice that $z(k)\in\mathbb{R}^{n_y}$ denotes the data collected offline and, to avoid confusion, is distinguished from $y(k)\in\mathbb{R}^{n_y}$, which denotes the online sensor measurements in the next subsection.
Given the above experiment, we obtain a sequence of noisy data and construct the following matrices of length $T$
\begin{equation} \label{eq:data_sequences}
\begin{array}{c}
  Z^{+} = \begin{bmatrix}z(1)&\dots&z(T)\end{bmatrix}, \;
  Z^{-} = \begin{bmatrix}z(0)&\dots&z(T-1)\end{bmatrix} \\ [0.5em]
  U^{-} = \begin{bmatrix}u(0)&\dots&u(T-1)\end{bmatrix}
\end{array}
\end{equation}
and let
$
 Z= \begin{bmatrix} Z^- & z(T)\end{bmatrix}
$.

\begin{myassumption} \label{assump:boundsx}
It is assumed that $\|x(k)\|_\infty\leq \kappa$, for every $k\in\{0,\dots,T\}$ and some known $\kappa>0$.
\end{myassumption}


  We aim to determine the mapping of the observation $Z^{+}$ and $Z^{-}$
  to the corresponding state-space region.
In other words, 
  we construct a zonotope $\mathcal{Z}_{x|z(k)} \subset \mathbb{R}^n$ 
  that contains all \textit{possible} $x(k) \in \mathbb{R}^{n_x}$
  given the measurement $z(k)\in\mathbb{R}^{n_y}$, output matrix $H\in\mathbb{R}^{n_y\times n_x}$ and 
  bounded noise $\gamma(k) \in \mathcal{Z}_{\gamma}$ 
  satisfying \eqnref{eq:training_measurements}.
Precisely, we construct the set
\begin{align*} 
    \mathcal{Z}_{x|z(k)} = \left\{ x(k) \in \mathbb{R}^n \,\big|\, 
        H x(k) = z(k) - \mathcal{Z}_{\gamma}
    \right\}
\end{align*}
from the following result inspired by \cite[Proposition 1]{conf:data_driven_set_based}. 
  \begin{lemma} \label{lem:measurement_zonotope}
    Let Assumption~\ref{assump:boundsx} hold.
    Then, given the measurement $z(k)\in\mathbb{R}^{n_y}$ satisfying \eqnref{eq:training_measurements} with bounded noise $\gamma(k) \in \mathcal{Z}_{\gamma} = \zono{c_{\gamma},G_{\gamma}}$, 
    the corresponding state vector $x(k)\in\mathbb{R}^{n_x}$
    is contained within the zonotope
    $ \mathcal{Z}_{x|z(k)} = \zono{c_{x|z(k)},G_{x|z(k)}},$
    where
    \begin{equation}
      \begin{split}
        c_{x|z(k)} &= H^\dagger \big( z(k) - c_{\gamma} \big) 
        \\
        G_{x|z(k)} &= \begin{bmatrix} 
          H^\dagger G_{\gamma} & \kappa (I_{n_x} - H^\dagger H) 
      \end{bmatrix}.
      \end{split}
      \label{eq:prop_1_eqn}
    \end{equation}
  \end{lemma}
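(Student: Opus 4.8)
The plan is to solve the underdetermined linear system $Hx(k) = z(k) - \gamma(k)$ for $x(k)$ in closed form, exploiting that $H$ is full-row rank, and then push the zonotopic uncertainty on $\gamma(k)$ together with the a priori state bound of Assumption~\ref{assump:boundsx} through that solution formula; this mirrors the strategy of \cite[Proposition~1]{conf:data_driven_set_based}.

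First I would rewrite \eqnref{eq:training_measurements} as $Hx(k) = z(k) - \gamma(k)$. Since $\rank(H) = n_y$, the matrix $H^\dagger = H^\t (HH^\t)^{-1}$ is a right inverse ($HH^\dagger = I_{n_y}$), and $H^\dagger H$ is symmetric and idempotent, i.e., the orthogonal projector onto the row space of $H$; hence $P \coloneqq I_{n_x} - H^\dagger H$ is the orthogonal projector onto $\ker H$. Writing $x(k) = H^\dagger H x(k) + P x(k)$ and substituting $Hx(k) = z(k) - \gamma(k)$ into the first summand gives the identity
\[ x(k) = H^\dagger\big(z(k) - \gamma(k)\big) + P\, x(k). \]

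Next I would parametrize the two remaining degrees of freedom. Because $\gamma(k)\in\mathcal{Z}_\gamma = \zono{c_\gamma,G_\gamma}$, we may write $\gamma(k) = c_\gamma + G_\gamma\beta$ with $\|\beta\|_\infty\le 1$, so the first summand equals $H^\dagger(z(k)-c_\gamma) + H^\dagger G_\gamma(-\beta) = c_{x|z(k)} + H^\dagger G_\gamma(-\beta)$, and $-\beta$ again lies in the unit box. For the second summand I would invoke Assumption~\ref{assump:boundsx}: setting $\beta' \coloneqq x(k)/\kappa$ we have $\|\beta'\|_\infty\le 1$ and $P x(k) = \kappa P \beta'$. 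Collecting both contributions yields
\[ x(k) = c_{x|z(k)} + \begin{bmatrix} H^\dagger G_\gamma & \kappa P\end{bmatrix}\begin{bmatrix} -\beta \\ \beta'\end{bmatrix}, \]
with the stacked coefficient vector bounded componentwise by $1$; by Definition~\ref{def:zonotope} this is precisely $x(k)\in\zono{c_{x|z(k)}, G_{x|z(k)}}$ with the center and generator matrix in \eqnref{eq:prop_1_eqn}.

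The computation itself is short; the only delicate point is the role of Assumption~\ref{assump:boundsx}. The identity $x(k) = H^\dagger(z(k)-\gamma(k)) + Px(k)$ leaves $Px(k)$ entirely free, so without an a priori bound the consistent set would be unbounded along $\ker H$; the hypothesis $\|x(k)\|_\infty\le\kappa$ is exactly what lets us express $Px(k)$ as $\kappa P$ times a unit-box vector. Note that we bound $x(k)$ itself rather than its projection $Px(k)$, which is why the $\infty$-norm bound transfers directly, with no additional projector-norm estimate needed.
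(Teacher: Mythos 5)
Your proof is correct and follows essentially the same route as the paper: you derive the identity $x(k) = H^\dagger\big(z(k)-\gamma(k)\big) + (I_{n_x}-H^\dagger H)x(k)$ and then bound the two terms using $\mathcal{Z}_\gamma$ and the box $\|x(k)\|_\infty\le\kappa$ from Assumption~\ref{assump:boundsx}, which is exactly the paper's argument, with the only difference that you write out the linear-map and Minkowski-sum steps explicitly via the $\beta$, $\beta'$ parametrization.
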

  \begin{proof}
  By Assumption~\ref{assump:boundsx}, we have $-\kappa 1_{n_x} \leq x(k) \leq \kappa 1_{n_x}$ element-wise. Thus, the zonotope $\mathcal{Z}_\kappa=\zono{0_{n_x},\kappa I_{n_x}}$ contains $x(k)$. 
  Left-multiplying by $H^\dagger$, adding $x(k)$ on both sides, and rearranging \eqref{eq:training_measurements} gives
  \begin{equation} \label{eq:x-equation}
  x(k) = H^\dagger \left( z(k) - \gamma(k) \right) + (I_{n_x} - H^\dagger H ) x(k)
  \end{equation}
  which implies that
  $
  \mathcal{Z}_{x|z(k)} = H^\dagger \left( z(k) - \mathcal{Z}_\gamma \right) + (I_{n_x} - H^\dagger H ) \mathcal{Z}_\kappa.
  $
  Therefore, performing the linear transformations and Minkowski sum operations gives $\mathcal{Z}_{x|z(k)}$ with the center $c_{x|z(k)}$ and the generator $G_{x|z(k)}$ as in \eqref{eq:prop_1_eqn}.
  \end{proof}
  
Using the zonotope $\mathcal{Z}_{x|z(k)}$ for each sample $x(k)$, we obtain a matrix zonotope that provides the mapping of $Z^{+}$ and $Z^{-}$ to the state space. For $\gamma(k)$, consider the extension of the noise zonotopes $\mathcal{Z}_{\gamma}=\zono{c_\gamma,[g_\gamma^{(1)} \dots g_\gamma^{(\xi_\gamma)}]}$ to a matrix zonotope $\mathcal{M}_{\gamma}=\zono{C_\gamma,\mathcal{G}_\gamma}$, where
$C_\gamma = [\begin{array}{ccc} c_\gamma & \dots & c_\gamma \end{array}] \in \mathbb{R}^{n_y\times T}$ and
$\mathcal{G}_\gamma = \{G_\gamma^{(1)}, \dots, G_\gamma^{(\xi_\gamma T)}\}$
with
$G_\gamma^{(j+(i-1)T)} = [\colsep=2pt\begin{array}{ccc} 0_{n_y\times (j-1)} & g_\gamma^{(i)} & 0_{n_y \times (T-j)} \end{array}]$,
for $i=1,\dots,\xi_\gamma$ and $j=1,\dots,T$ and similarly we define $\mathcal{M}_{w}=\zono{C_w,\mathcal{G}_w}$. 

\begin{proposition} \label{prop:matrixzonoXZ}
Let Assumption~\ref{assump:boundsx} hold. Then, given the data $Z^+$ in \eqref{eq:data_sequences}, and noise matrix zonotope $\mathcal{M}_\gamma=\zono{C_\gamma,\mathcal{G}_\gamma}$, the unknown sequence of state $X^+ = [\colsep=2pt\begin{array}{ccc} x(1) & \dots & x(T) \end{array}]$ is contained within the zonotope $\mathcal{M}_{X|Z}^+=\zono{C_{X|Z}^+,\mathcal{G}_{X|Z}}$, where
\begin{equation}
    \colsep=1pt
    \begin{split}
        C_{X|Z}^+ &= H^\dagger (Z^+ - C_\gamma) \\
        \mathcal{G}_{X|Z} &= \{H^\dagger \mathcal{G}_\gamma, \kappa (I_{n_x}\!-H^\dagger H) 1_{n_x\times T} \}.
    \end{split}
    \label{eq:matrixzonoXZ}
\end{equation}
\end{proposition}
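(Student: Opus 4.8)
The plan is to lift Lemma~\ref{lem:measurement_zonotope} from a single sample to the whole offline data set by stacking columns. With Assumption~\ref{assump:boundsx} in force, for every $k\in\{1,\dots,T\}$ the decomposition \eqref{eq:x-equation} obtained in the proof of Lemma~\ref{lem:measurement_zonotope} reads
\[
x(k) = H^\dagger\big(z(k)-\gamma(k)\big) + (I_{n_x}-H^\dagger H)\,x(k),
\]
where $\gamma(k)\in\mathcal{Z}_\gamma=\zono{c_\gamma,G_\gamma}$ and, by Assumption~\ref{assump:boundsx}, $x(k)\in[-\kappa,\kappa]^{n_x}$; the key point is that the noise $\gamma(k)$ and the $\ker H$-component of $x(k)$ are free to vary \emph{independently across $k$}. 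Collecting these $T$ identities columnwise gives the matrix equation $X^+ = H^\dagger(Z^+-\Gamma)+(I_{n_x}-H^\dagger H)X^+$ with $\Gamma=[\,\gamma(1)\ \cdots\ \gamma(T)\,]$ and $Z^+$ as in \eqref{eq:data_sequences}, so it remains to over-approximate the right-hand side by a matrix zonotope using the calculus of Section~\ref{sec:preliminaries}.

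I would treat the two terms separately. For the noise term, since each column of $\Gamma$ lies independently in $\mathcal{Z}_\gamma$, the matrix $\Gamma$ ranges over the matrix zonotope $\mathcal{M}_\gamma=\zono{C_\gamma,\mathcal{G}_\gamma}$ — this is precisely what the block-placement definition of $G_\gamma^{(j+(i-1)T)}$ (one copy of each $g_\gamma^{(i)}$ in each column slot) encodes. Translating by the constant $Z^+$ and applying the linear map $H^\dagger$, via $L\zono{C,\mathcal{G}}=\zono{LC,L\mathcal{G}}$, yields center $C_{X|Z}^+=H^\dagger(Z^+-C_\gamma)$ and generator family $H^\dagger\mathcal{G}_\gamma$. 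For the homogeneous term, Assumption~\ref{assump:boundsx} places $X^+$ inside the matrix zonotope centered at $0_{n_x\times T}$ whose generators fill each entry with $\pm\kappa$ independently; pushing that box through $I_{n_x}-H^\dagger H$ and taking the matrix-zonotope Minkowski sum with the noise part produces a matrix zonotope over-approximating $X^+$, which I would then rewrite in the compact form \eqref{eq:matrixzonoXZ}.

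The step I expect to need care is the matrix-zonotope representation of $(I_{n_x}-H^\dagger H)X^+$. Because the $\ker H$-component of $x(k)$ is independent across columns, the faithful over-approximation uses a generator matrix for each (state coordinate, time index) pair — i.e.\ the $n_xT$ matrices obtained by placing the columns of $\kappa(I_{n_x}-H^\dagger H)$ into each of the $T$ column slots — rather than the single matrix $\kappa(I_{n_x}-H^\dagger H)1_{n_x\times T}$, which ties all columns to one shared scalar and is therefore generally too tight. I would accordingly either read the displayed $\mathcal{G}_{X|Z}$ as shorthand for that $n_xT$-element family or establish the containment with the explicit family; the remaining operations — the translation by $Z^+$, the two linear maps, and the Minkowski sum — are routine and bring in no new idea.
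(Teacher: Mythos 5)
Your proposal is correct and follows essentially the same route as the paper: stack the per-sample identity \eqref{eq:x-equation} columnwise, note that the noise matrix ranges over $\mathcal{M}_\gamma$, bound the $\ker H$-component of $X^+$ via Assumption~\ref{assump:boundsx}, and finish with the linear maps $H^\dagger$ and $I_{n_x}-H^\dagger H$ plus a matrix-zonotope Minkowski sum. The one place you depart from the paper is your caveat about the kernel term, and it is well taken: the paper's proof over-approximates $X^+$ by a matrix zonotope $\mathcal{M}_\kappa$ with the \emph{single} generator $\kappa 1_{n_x\times T}$, whose elements are matrices with all entries equal to a common value $\beta\kappa$, so as literally written it does not contain every $X^+$ with $\|x(k)\|_\infty\leq\kappa$ (and after the map $I_{n_x}-H^\dagger H$ it forces all columns to be identical multiples of one fixed vector). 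Your replacement by the $n_xT$-generator family (entrywise generators $\kappa\,e_ie_j^\t$ before the map, i.e.\ each column of $\kappa(I_{n_x}-H^\dagger H)$ placed in each of the $T$ column slots after it) is the faithful over-approximation, mirroring how $\mathcal{M}_\gamma$ itself is built from $\mathcal{Z}_\gamma$, and with it the claimed containment genuinely holds; reading the displayed $\kappa(I_{n_x}-H^\dagger H)1_{n_x\times T}$ as shorthand for that family is exactly the right fix. So your argument matches the paper's proof in structure while correcting its generator bookkeeping for the kernel term.
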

\begin{proof}
From \eqref{eq:x-equation}, we have
$
X^+ = H^\dagger (Z^+ - \Gamma^+) + (I-H^\dagger H) X^+
$
where $\Gamma^+ = [\colsep=2pt\begin{array}{ccc} \gamma(1) & \dots & \gamma(T) \end{array}]\in\mathcal{M}_{\gamma}$.
Due to Assumption~\ref{assump:boundsx}, we have $X^+\in\mathcal{M}_\kappa$ where $\mathcal{M}_\kappa$ is a matrix zonotope with center at $0$ and one generator $\kappa 1_{n_x\times T}$. Therefore, we over-approximate $X^+$ by
\[
\mathcal{M}_{X|Z}^+ = H^\dagger ( Z^+ - \mathcal{M}_\gamma ) + (I-H^\dagger H) \mathcal{M}_\kappa.
\] 
Hence, by applying the linear transformations and performing the Minkowski sum of the matrix zonotopes on the right-hand side of the above equation, we obtain \eqref{eq:matrixzonoXZ}.
\end{proof}

Let 
$
X^-=\left[\colsep=2pt\begin{array}{ccc} x(0) & \dots & x(T-1) \end{array}\right].
$ 
We over-approximate $X^-$ by $\mathcal{M}_{X|Z}^-$ by making use of Proposition \ref{prop:matrixzonoXZ} and $Z^-$. From \eqref{eq:pcg}, consider
\begin{align} \label{eq:bigomega}
\Omega(X^-,U^-)=\left[\colsep=2pt\begin{array}{ccc} h(x(0),u(0)) & \dots & h(x(T-1),u(T-1))\end{array}\right]
\end{align}
to be a matrix in $\mathbb{R}^{m_a\times T}$ with state and input trajectories substituted in $h(x(k),u(k))$, for $k=0,\dots,T-1$. By converting the matrix zonotope $\mathcal{M}_{X|Z}^-$ into interval matrix $\mathcal{I}_{X|Z}^- = \text{int}(\mathcal{M}_{X|Z}^-)$ and substitute in \eqref{eq:bigomega}, we obtain an interval matrix
\begin{equation*}
\begin{aligned}
\Omega(\mathcal{I}_{X|Z}^-,U^-) &{=} \left[\begin{matrix} h(\mathcal{I}_{X|Z}^-(:,0),u(0)) & \dots \end{matrix}\right.\\
& \qquad \left.\begin{matrix}
h(\mathcal{I}_{X|Z}^-(:,T-1),u(T-1))
\end{matrix}\right].
\end{aligned}
\end{equation*}


\begin{proposition}
\label{prop:sigmaM_p}
The matrix zonotope 
\begin{align}
    \mathcal{M}_{\Theta_p} = (\mathcal{M}^{+}_{X|Z} - \mathcal{M}_w) \Omega(\mathcal{I}_{X|Z}^-,U^-)^\dagger \label{eq:Msigma}
\end{align} 
  contains all matrices $\Theta_p$ that are consistent with the data $\{Z,U^-\}$ and the process noise matrix zonotope $\mathcal{M}_w$. 
\end{proposition}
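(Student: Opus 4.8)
The plan is to start from the exact data relation implied by the dynamics and then replace each unknown quantity in it by a set that provably contains it, exploiting that every set operation appearing in \eqref{eq:Msigma} is inclusion-monotone. Substituting \eqref{eq:pcg} into \eqref{eq:underlying_system} and stacking the samples $k=0,\dots,T-1$ columnwise yields $X^+ = \Theta_p\,\Omega(X^-,U^-) + W^-$, where $W^- = [\,w(0)\ \dots\ w(T-1)\,]$. Provided $\Omega(X^-,U^-)$ has full row rank $m_a$, so that its right inverse satisfies $\Omega(X^-,U^-)\,\Omega(X^-,U^-)^\dagger = I_{m_a}$, this gives
\[
\Theta_p = \big(X^+ - W^-\big)\,\Omega(X^-,U^-)^\dagger ,
\]
and, more generally, every $\Theta_p$ consistent with $\{Z,U^-\}$ and $\mathcal{M}_w$ is of this form for some admissible $X^+$, $W^-$, and $X^-$.

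Next I would enclose each of the three factors. By \propref{prop:matrixzonoXZ}, applied to $Z^+$ and to $Z^-$ respectively, we have $X^+ \in \mathcal{M}_{X|Z}^+$ and $X^- \in \mathcal{M}_{X|Z}^-$, while $W^- \in \mathcal{M}_w$ by construction of $\mathcal{M}_w$ from $\mathcal{Z}_w$. Passing to the interval over-approximation $\mathcal{I}_{X|Z}^- = \text{int}(\mathcal{M}_{X|Z}^-)$ preserves inclusion columnwise, $x(k) \in \mathcal{I}_{X|Z}^-(:,k)$; since the natural interval extension of each monomial---hence of $h$---encloses the value of that monomial over the input interval, $h(x(k),u(k)) \in h(\mathcal{I}_{X|Z}^-(:,k),u(k))$ for every $k$, i.e., $\Omega(X^-,U^-) \in \Omega(\mathcal{I}_{X|Z}^-,U^-)$. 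Finally, the interval pseudoinverse is computed so that $M \in \mathcal{I}$ implies $M^\dagger \in \mathcal{I}^\dagger$, hence $\Omega(X^-,U^-)^\dagger \in \Omega(\mathcal{I}_{X|Z}^-,U^-)^\dagger$.

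It then remains to propagate these inclusions through \eqref{eq:Msigma}: negation and the Minkowski sum of matrix zonotopes give $X^+ - W^- \in \mathcal{M}_{X|Z}^+ - \mathcal{M}_w$, and the product of a matrix zonotope with an interval matrix is defined as an over-approximation of $\{AB : A \in \mathcal{M},\ B \in \mathcal{I}\}$; therefore
\[
\Theta_p = (X^+ - W^-)\,\Omega(X^-,U^-)^\dagger \in (\mathcal{M}_{X|Z}^+ - \mathcal{M}_w)\,\Omega(\mathcal{I}_{X|Z}^-,U^-)^\dagger = \mathcal{M}_{\Theta_p},
\]
which, since $\Theta_p$ was an arbitrary consistent parameter matrix, is the claim.

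The main obstacle is the full-row-rank requirement on $\Omega(X^-,U^-)$ used to pass from $\Theta_p\,\Omega(X^-,U^-) = X^+ - W^-$ to the explicit formula for $\Theta_p$: without a persistency-of-excitation-type condition guaranteeing that the collected trajectory produces $m_a$ linearly independent monomial columns, the right inverse need not act as the identity and some consistent $\Theta_p$ could escape the computed set. I would therefore either promote this to an explicit standing assumption or require $h$ to be chosen with $m_a \le T$ together with a genericity condition on the data. A secondary, purely technical point is justifying the monotonicity of the specific interval-inverse routine invoked; this is inherited from the cited construction and only introduces harmless conservatism in $\mathcal{M}_{\Theta_p}$.
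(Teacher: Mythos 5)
Your proof follows essentially the same route as the paper's: write the stacked data relation $X^+=\Theta_p\,\Omega(X^-,U^-)+W^-$, replace the unknown $X^+$, $W^-$, and $X^-$ by their enclosures $\mathcal{M}_{X|Z}^+$, $\mathcal{M}_w$, and $\mathcal{I}_{X|Z}^-$ (via Proposition~\ref{prop:matrixzonoXZ} and the interval extension of $h$), and propagate the inclusions through the operations in \eqref{eq:Msigma}. Your explicit handling of the right-inverse step---requiring $\Omega(X^-,U^-)$ to have full row rank, i.e.\ a persistency-of-excitation-type condition, and inclusion-monotonicity of the interval pseudoinverse---makes precise details that the paper's own proof leaves implicit, so it is a tightening rather than a different approach.
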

\begin{proof}
From \eqref{eq:sysnonlin_poly} and \eqref{eq:pcg}, we have
    $X^+ = \Theta_p \Omega(X^-,U^-)  + W^- $
where $W^-=[\colsep=2pt\begin{array}{ccc} w(0) & \dots & w(T-1) \end{array}]$. The true $\Theta_p$ can be represented by a specific choice of $\beta^{(i)}$ in the matrix noise zonotope $\mathcal{M}_w$ which in turn results in the specific $W^- \in \mathcal{M}_w$. However, as we do not know the true $W^-$, we consider all matrices in the matrix zonotope $\mathcal{M}_w$ and compute the corresponding $\Theta_p$. Moreover, we do not have access to $X^+$, $X^-$, so we over-approximate $\Theta_p$ by considering the matrix zonotope $\mathcal{M}_{X|Z}^+$ in \eqref{eq:matrixzonoXZ} that bounds $X^+$ by Proposition~\ref{prop:matrixzonoXZ}. Finally, instead of $\Omega(X^-,U^-)$, we consider the interval matrix $\Omega(\mathcal{I}_{X|Z}^-,U^-)$ which gives \eqref{eq:Msigma}.
\end{proof}

\begin{algorithm}[t]
  \caption{Set-based estimation of polynomial systems}
  \label{alg:PolySetBasedEstimation}
  \textbf{Input}: Data matrices $\{Z,U^-\}$ of the polynomial system \eqref{eq:sysnonlin_poly}, initial set $\mathcal{X}_{0}$, process noise zonotope $\mathcal{Z}_w$ and matrix zonotope $\mathcal{M}_w$, online input $u(k)$ and measurement $y(k)$, for $k = 0, \dots,N-1$. \\
  \textbf{Output}: Estimated sets $\mathcal{N}_{\Theta_p}$ and $\hat{\mathcal{R}}_{k}$ 
 \begin{algorithmic}[1]
  \Statex \texttt{// Offline phase //}
 %
  \State Obtain $\mathcal{M}^{-}_{X|Z},\mathcal{M}^{+}_{X|Z}$ from Proposition \ref{prop:matrixzonoXZ} 
  \State $\mathcal{I}^-_{X|Z} = \text{int}(\mathcal{M}^{-}_{X|Z})$ 
    \State $\mathcal{M}_{\Theta_p} = (\mathcal{M}^{+}_{X|Z} - \mathcal{M}_w) \Omega(\mathcal{I}^{-}_{X|Z},U^-)^\dagger$\label{ln:algMsigma}
  \State $\overline{C}_{\Theta_p} = C_{\Theta_p}$ \label{ln:C}
  \State $\overline{G}_{\Theta_p}^{(i)} = G_{\Theta_p}^{(i)}$, for $i=1, \dots, \xi_{\Theta_p}$ \label{ln:G1}
  \State $\overline{G}_{\Theta_p}^{(i)} = 0$, for $i=\xi_{\Theta_p} +1, \dots,\xi_{\Theta_p} +n_xm_a$ \label{ln:G2}
  \State $\overline{\mathcal{G}}_{\Theta_p}=\{  \overline{G}_{\Theta_p}^{(1)},\dots,\overline{G}_{\Theta_p}^{(\xi_{\Theta_p} + n_xm_a)}\}$ \label{ln:Gset}
  \State  $A_{\Theta_p}^{(i)} = \bar{Q}G_{\Theta_p}^{(i)}$, for $i=1, \dots, \xi_{\Theta_p}$ \label{ln:A1}
    \State  $A_{\Theta_p}^{(\xi T+k)} = - E_i R E_j$, where $(i,j)\mapsto k=1,\dots,n_x m_a$, for $i=1, \dots, n_x$ and $j=1, \dots, m_a$ \label{ln:A2}
      \State $\mathcal{A}_{\Theta_p}=\{  A_{_{\Theta_p}}^{(1)}, \dots, A_{\Theta_p}^{(\xi_{\Theta_p}+ n_x m_a)}\}$, $B_{\Theta_p} =  \bar{Y} - \bar{Q} C_{\Theta_p}$ \label{ln:Aset}
    \State $\mathcal{N}_{\Theta_p} = \zono{\overline{C}_{\Theta_p},\overline{\mathcal{G}}_{\Theta_p},\mathcal{A}_{\Theta_p},B_{\Theta_p}}$  \label{ln:N}
    \Statex  \texttt{// Online phase //} 
  \State $\hat{\mathcal{R}}_{0} =\mathcal{X}_{0}$  \label{ln:R_0}
  \For{$k = 0:N-1$}
  \State $\hat{\mathcal{I}}_{k} = \text{int}(\hat{\mathcal{R}}_k)$ \label{ln:algI_i}
  \State $\tilde{\mathcal{R}}_{k+1} = \mathcal{N}_{\Theta_p}\Omega(\hat{\mathcal{I}}_{k},u(k)) +  \mathcal{Z}_w$ \label{ln:timeupdate}
  \State Obtain $\hat{\mathcal{R}}_{k+1}$ from Proposition~\ref{prop:con_zonotopes} 
  \label{ln:measupdate} 
  \EndFor
  \end{algorithmic}
\end{algorithm}

The above proposition proves that the matrix zonotope $\mathcal{M}_{\Theta_p}$ contains the true $\Theta_p$, however, it might be conservative. To obtain a tighter bound on the true $\Theta_p$, one might resort to prior knowledge about the system, such as bounds on the parameters and zero-pattern structure of $\Theta_p$, and incorporate it in the estimated set by using a constrained matrix zonotope. Such knowledge can be attained by studying the physics of the system or the environment in which the system operates. It would be beneficial to make use of this side information to have less conservative estimated set bounds. We consider the proposed approach in \cite{conf:ourjournal} to incorporate prior information about the unknown coefficients like decoupled dynamics, partial knowledge, or prior bounds on entries of the unknown coefficients matrices. We consider any side information that can be formulated as 
\begin{equation}
    | \bar{Q} \Theta_p - \bar{Y} | \leq \bar{R} \label{eq:sideinfo}
\end{equation}
where $\bar{Q} \in \mathbb{R}^{n_s\times n_x}$, $\bar{Y} \in \mathbb{R}^{n_s\times m_a}$, and $\bar{R} \in \mathbb{R}^{n_s\times m_a}$ are matrices defining the side information for the true $\Theta_p\in\mathbb{R}^{n_x\times m_a}$. Here, $|\cdot|$ and $\leq$ are element-wise operators.
    
After obtaining a matrix zonotope $\mathcal{M}_{\Theta_p}$ that bounds the set of unknown coefficients,  we utilize  the constrained matrix zonotopes to incorporate side information in the set-based estimation. Specifically, we compute the constrained matrix zonotope $\mathcal{N}_{\Theta_p}$ as in lines \ref{ln:C}-\ref{ln:N} of Algorithm \ref{alg:PolySetBasedEstimation}. We use the same center of the $\mathcal{M}_{\Theta_p}$ in line \ref{ln:C} and append zeros to its list of generators in lines \ref{ln:G1} and \ref{ln:G2}. Then, we compute the list of constrained matrices in lines \ref{ln:A1} to \ref{ln:Aset}. Our computations to tighten the set of the unknown coefficients are adaptations of the theory in \cite{conf:ourjournal} and can be easily proved.

\subsection{Online Phase} \label{sec:online}

We present the online estimation phase by considering the system \eqref{eq:sysnonlin_poly} with measurements $y(k)$.
This phase consists of a time update step computing $\tilde{\mathcal{R}}_{k}$ and a measurement update step computing $\hat{\mathcal{R}}_{k}$ as described in Algorithm~\ref{alg:PolySetBasedEstimation}.

\subsubsection{Time update}

We first initialize the measurement update set $\hat{\mathcal{R}}_0$ in line~\ref{ln:R_0}. Then, at each time step $k=0,\dots,N-1$, we convert the estimated set into an interval $\hat{\mathcal{I}}_k$ in line \ref{ln:algI_i} by using \eqref{eq:zono2int}. Given the current input $u(k)$ and the interval matrix $\hat{\mathcal{I}}_k$, we substitute in the list of monomials $\Omega(\hat{\mathcal{I}}_{k},u(k))$. Then, in line~\ref{ln:timeupdate}, we propagate ahead the measurement update set $\hat{\mathcal{I}}_{k}$ using the constrained matrix zonotope $\mathcal{N}_{\Theta_p}$ obtained in the offline phase, interval of all monomials $\Omega(\hat{\mathcal{I}}_{k},u(k))$, and the noise zonotope $\mathcal{Z}_w$. The computation in line \ref{ln:timeupdate} requires multiplying a constrained matrix zonotope $\mathcal{N}_{\Theta_p}$ by an interval $\Omega(\hat{\mathcal{I}}_{k},u(k))$ and Minkowski sum with a zonotope $\mathcal{Z}_w$. This can be over-approximated by either converting $\Omega(\hat{\mathcal{I}}_{k},u(k))$ to a matrix zonotope \cite{conf:ourjournal} and follow the traditional multiplication scheme in \cite{conf:cora,conf:thesisalthoff} or by converting the $\Omega(\hat{\mathcal{I}}_{k},u(k))$ to a zonotope and follow the multiplication proposed in \cite[Proposition 2]{conf:ourjournal}. In general, the results of the computations can be represented as a constrained zonotope $\tilde{\mathcal{R}}_{k}  = \zono{ \tilde{c}_k,\tilde{G}_k,\tilde{A}_k,\tilde{b}_k}$.    
  
\subsubsection{Measurement update}  

We use the implicit intersection approach presented in \cite{conf:data_driven_set_based}, where the measurement update set $\hat{\mathcal{R}}_{k}$ is determined directly from the time update set $\tilde{\mathcal{R}}_{k}$ and the measurements $y(k)$ in line \ref{ln:measupdate} as presented in the following proposition.

\begin{proposition}\!(\!\!\cite{conf:data_driven_set_based})
  The intersection of $\!{\tilde{\mathcal{R}}_{k} \! {=}\! \zono{\tilde{c}_k,\tilde{G}_k,\tilde{A}_k,\tilde{b}_k}\!}$ 
    and a region for $x(k)$ corresponding to $y(k)$
      as in \eqref{eq:observations} 
    can be described by the constrained zonotope 
    $\hat{\mathcal{R}}_k=\zono{\hat{c}_k,\hat{G}_k,\hat{A}_k,\hat{b}_k}$ 
 where $\hat{c}_{k} = \tilde{c}_{k}$, $\hat{G}_{k} = \tilde{G}_{k}$, and
  \begin{equation*}
      \hat{A}_k = 
      \begin{bmatrix}
        \tilde{A}_k  & 0  \\
      H \tilde{G}_k &G_{v}   \end{bmatrix}, 
      \hat{b}_k = 
      \begin{bmatrix}
        \tilde{b}_k \\
        y(k) - H {c}_k - c_{v}
      \end{bmatrix}. 
    \end{equation*}
  \vspace{-4mm}
\label{prop:con_zonotopes}
\end{proposition}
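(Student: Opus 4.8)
# Proof Proposal for Proposition~\ref{prop:con_zonotopes}

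The plan is to verify that the constrained zonotope $\hat{\mathcal{R}}_k = \zono{\hat c_k, \hat G_k, \hat A_k, \hat b_k}$ constructed in the statement is precisely the intersection $\tilde{\mathcal{R}}_k \cap \mathcal{S}_k$, where $\mathcal{S}_k = \{x \in \mathbb{R}^{n_x} \mid \exists\, v \in \mathcal{Z}_v,\ y(k) = Hx + v\}$ is the set of states consistent with the online measurement $y(k)$ via \eqref{eq:observations}. The argument is a pure set-membership (double-inclusion) computation on the $\beta$-parametrizations, so I would proceed by unpacking definitions rather than by any clever construction.

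First I would write out what it means for a point $x$ to lie in each set. A point $x \in \tilde{\mathcal{R}}_k$ iff there exists $\tilde\beta$ with $-1 \le \tilde\beta \le 1$, $\tilde A_k \tilde\beta = \tilde b_k$, and $x = \tilde c_k + \tilde G_k \tilde\beta$. Writing $\mathcal{Z}_v = \zono{c_v, G_v}$, a point $x \in \mathcal{S}_k$ iff there exists $\beta_v$ with $-1 \le \beta_v \le 1$ and $y(k) = Hx + c_v + G_v \beta_v$, i.e. $Hx = y(k) - c_v - G_v \beta_v$. Substituting $x = \tilde c_k + \tilde G_k \tilde\beta$ into this last equation and rearranging gives $H\tilde G_k \tilde\beta + G_v \beta_v = y(k) - H\tilde c_k - c_v$. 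Stacking the concatenated variable $\hat\beta = [\tilde\beta^\t\ \beta_v^\t]^\t$, the two equality constraints $\tilde A_k \tilde\beta = \tilde b_k$ and $H\tilde G_k\tilde\beta + G_v\beta_v = y(k) - H\tilde c_k - c_v$ are exactly $\hat A_k \hat\beta = \hat b_k$ with the block matrices as stated (noting $\hat c_k = \tilde c_k$, so $H c_k = H\tilde c_k$). The box constraint $-1 \le \hat\beta \le 1$ is the conjunction of $-1\le\tilde\beta\le1$ and $-1\le\beta_v\le1$, and since $\hat G_k = \tilde G_k$ and $\hat c_k = \tilde c_k$ we recover $x = \hat c_k + \hat G_k\hat\beta$ using only the $\tilde\beta$-block. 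This establishes $x \in \hat{\mathcal{R}}_k \iff x \in \tilde{\mathcal{R}}_k$ and $x \in \mathcal{S}_k$, i.e. $\hat{\mathcal{R}}_k = \tilde{\mathcal{R}}_k \cap \mathcal{S}_k$.

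Since this is essentially the generalized intersection formula for constrained zonotopes (the map $x \mapsto Hx$ applied through a measurement zonotope), I expect no genuine obstacle; the only care needed is bookkeeping the dimensions of the zero block in $\hat A_k$ (it has $n_c$ rows matching $\tilde A_k$ and a number of columns equal to the generator count of $\mathcal{Z}_v$) and confirming that introducing the auxiliary variables $\beta_v$ does not spuriously enlarge the projected set — which it does not, because each admissible $\beta_v$ corresponds to a legitimate choice of noise $v \in \mathcal{Z}_v$. One may simply cite \cite{conf:data_driven_set_based}, from which this is adapted, for the detailed verification; I would include the short double-inclusion chain above for completeness.
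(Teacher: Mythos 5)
Your proposal is correct, and it is exactly the standard argument behind this result: the paper itself does not reprove the proposition but imports it from \cite{conf:data_driven_set_based}, where the proof is the same double-inclusion verification on the stacked variable $[\tilde\beta^\t\ \beta_v^\t]^\t$ that you give (the generalized-intersection construction for constrained zonotopes). Your side remark that $\hat G_k=\tilde G_k$ should implicitly be read as $[\tilde G_k\ 0]$ so that the generator matrix matches the enlarged $\hat\beta$, and that the auxiliary $\beta_v$ does not enlarge the projection, is precisely the bookkeeping needed; nothing is missing.
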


The reachable set computed in Algorithm \ref{alg:PolySetBasedEstimation} over-approximates the
exact reachable set, i.e., $\mathcal{R}_{k+1} \subseteq \hat{\mathcal{R}}_{k+1}$ due to the inclusion of the true $\Theta_p$ inside $\mathcal{M}_{\Theta_p}$ and accordingly inside  $\mathcal{N}_{\Theta_p}$ under the assumption that the side information \eqref{eq:sideinfo} holds for the true $\Theta_p$.

\begin{table*}[hbt!]
\caption{System model bounds.}
\label{tab:matzono}
\centering
\normalsize
\begin{tabular}{c  l l }
\toprule
  & Left bound & Right bound\\
\midrule
Matrix zonotope $\mathcal{M}_{\Theta_p}$ & 
$\left[{\scriptsize \begin{array}{rrrr} 
 0.23 & -0.17 & -0.23 & -0.70 \\ 
-1.29 &  0.63 & -0.35 & -1.20 \\
-1.61 & -0.41 &  0.58 & -1.30
\end{array}}\right]$
& 
$\left[{\scriptsize \begin{array}{cccc} 
1.79 & 0.23 & 0.19 & 0.66 \\ 
1.37 & 1.32 & 0.38 & 1.11 \\
1.50 & 0.39 & 1.43 & 1.41
\end{array}}\right]$  
\\ [1.5em]
Constrained matrix zonotope $\mathcal{N}_{\Theta_p}$ & $\left[{\scriptsize \begin{array}{crcr} 
1 &  0   &   0 & -0.69\\ 
0 & 0.63 &   0 & 0\\
0 & -0.30  & 1     &              0
\end{array}}\right]$ 
&  
$\left[{\scriptsize \begin{array}{cccc}
1 &    0 &  0 &  0\\ 
 0 &   1.30 &  0 &  1\\
0 &    0.39 &  1&  0
\end{array}}\right]$\\
\bottomrule
\end{tabular}
\end{table*}

\begin{figure*}[hbt!]
    \centering
    \begin{subfigure}[h]{0.3\textwidth}
      \centering
        \includegraphics[width=\textwidth]{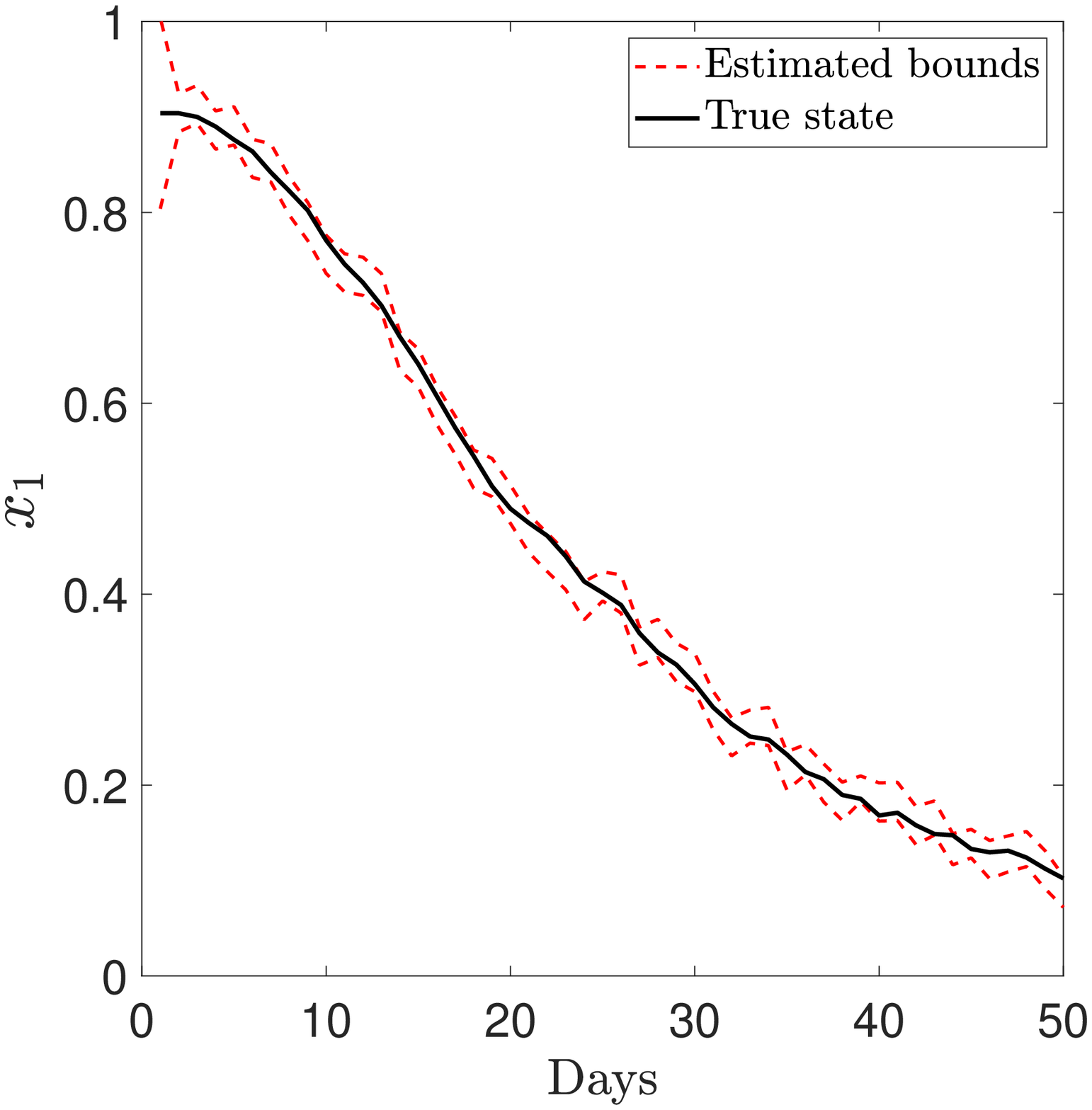}
        \caption{Susceptible $x_1(k)$}
        \label{fig:x1}
    \end{subfigure}
   \begin{subfigure}[h]{0.3\textwidth}
      \centering
        \includegraphics[width=\textwidth]{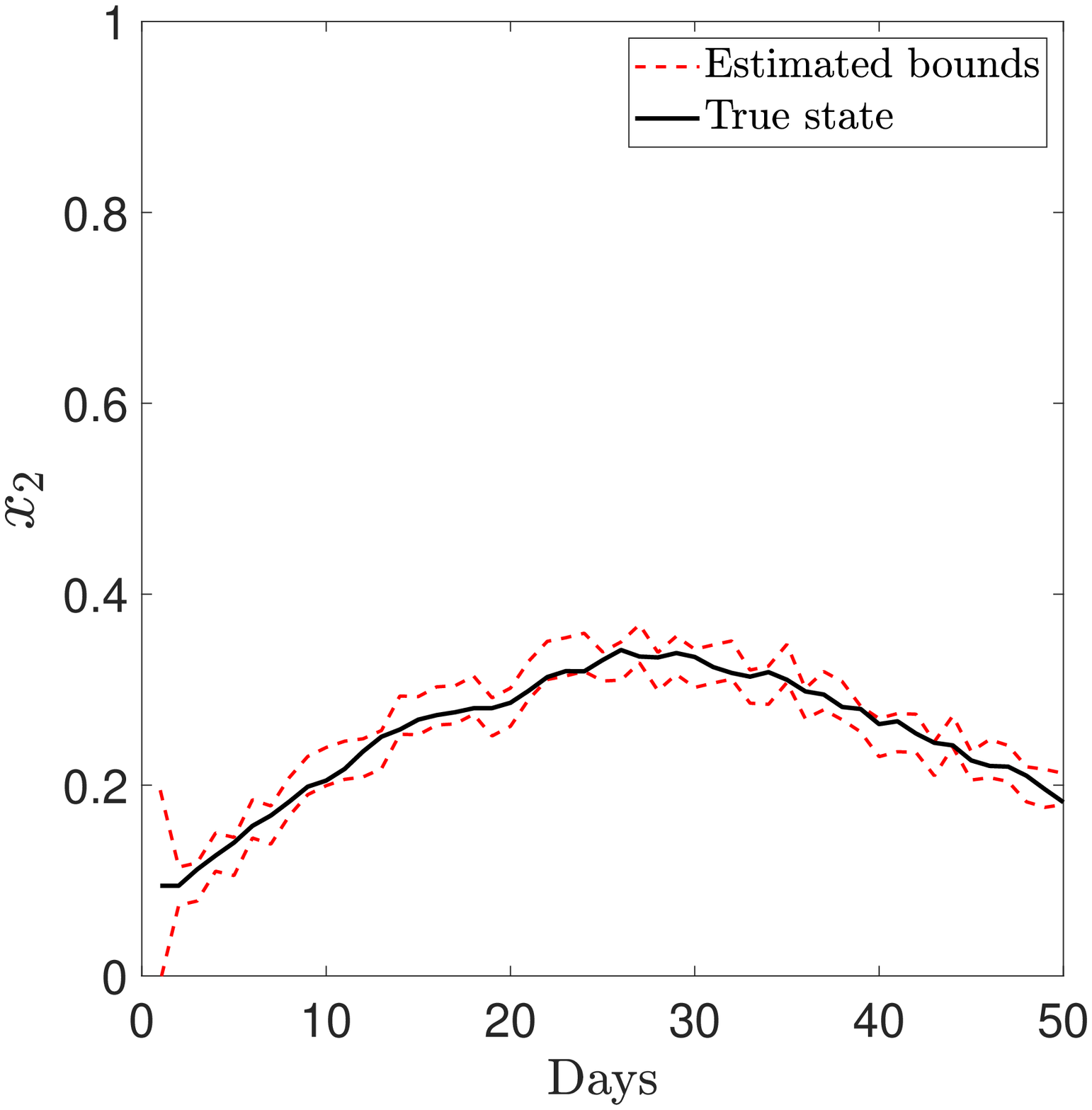}
        \caption{Infected $x_2(k)$}
        \label{fig:x2}
    \end{subfigure}
    \begin{subfigure}[h]{0.3\textwidth}
      \centering
        \includegraphics[width=\textwidth]{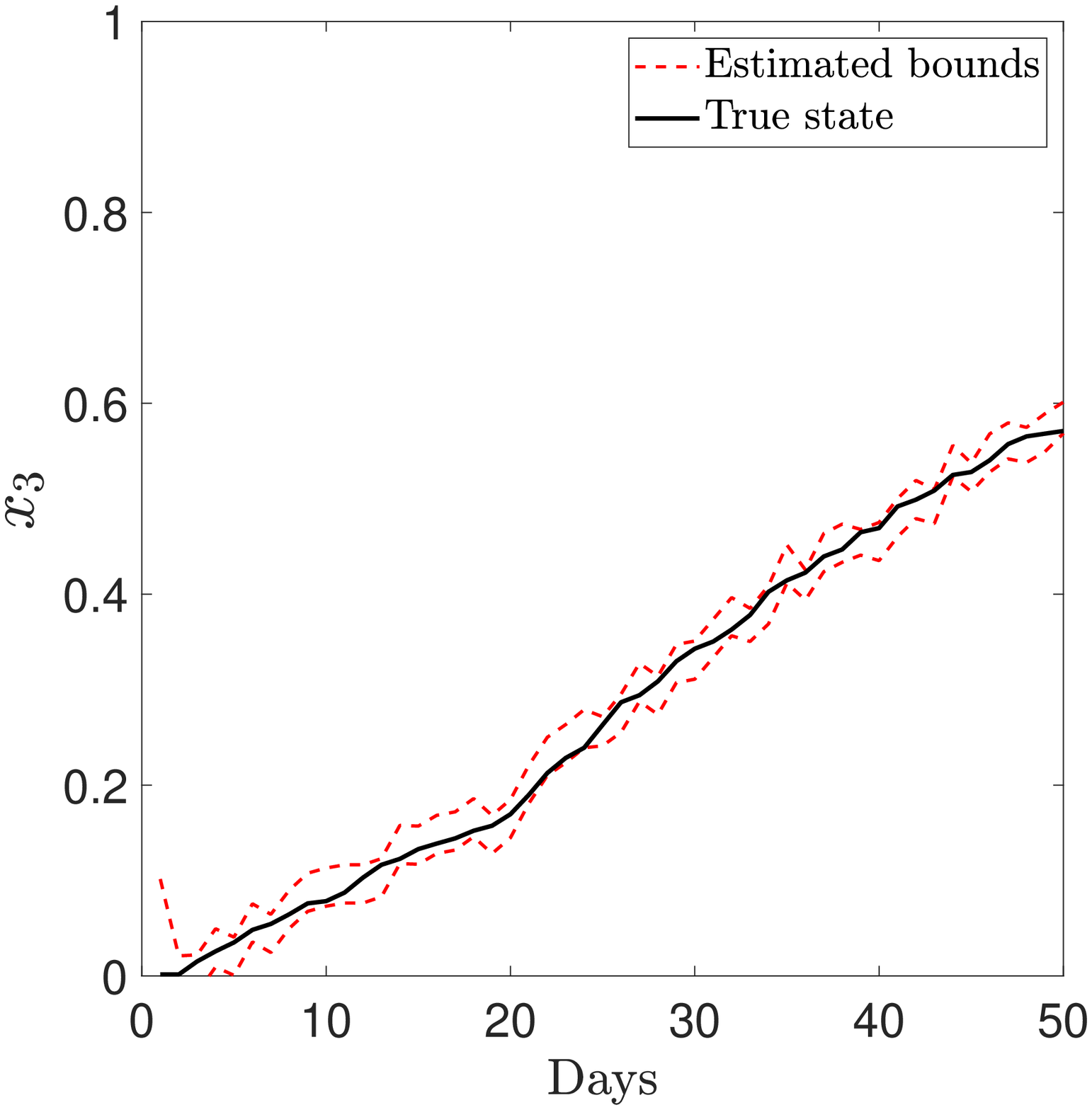}
        \caption{Recovered $x_3(k)$}
        \label{fig:x3}
    \end{subfigure}
\caption{Set-based state estimation of SIR epidemics.}
    \label{fig:threeresults}
    \vspace{-20pt}
\end{figure*}

\section{Application to SIR Epidemics} \label{sec:eval}

Consider a discrete-time SIR 
epidemic model with constant population
\[\begin{split}
    x_1(k+1) &= x_1(k) - \beta x_1(k) x_2(k) + w_1(k) \\
    x_2(k+1) &= x_2(k) + \beta x_1(k) x_2(k) - \gamma x_2(k) + w_2(k) \\
    x_3(k+1) &= x_3(k) + \gamma x_2(k) + w_3(k)
\end{split}\]
where $k=0,1,2,\dots$ are days; $x_1(k),x_2(k),x_3(k)\in[0,1]$ are respectively the proportions of susceptible, infected, and recovered populations; $\beta,\gamma\in [0,1]$ are the infection and recovery parameters; and $w_1(k),w_2(k),w_3(k)$ are the bounded process noise inputs. Notice that the discretization step $\bar{h}$ is assumed to be one day, therefore, if $\beta\in[0,1]$, then the condition $\bar{h}\beta\leq 1$ of \cite{mbio:allen1994} is satisfied. Considering
\begin{align}
\Theta_p &= \left[\begin{array}{cccr}
1 & 0 & 0 & -\beta \\
0 & 1-\gamma & 0 & \beta \\
0 & \gamma & 1 & 0
\end{array}\right] \label{eq:sirTheta_p} \\
h(x(k)) &= \left[\begin{array}{cccc}
    x_1(k) & x_2(k) & x_3(k) & x_1(k) x_2(k)
\end{array}\right]^\t \nonumber
\end{align}
then, from \eqref{eq:sysnonlin_poly} and \eqref{eq:pcg}, we can write the SIR model as
\[
x(k+1) = f_p(x(k)) + w(k) = \Theta_p h(x(k)) + w(k)
\]
where $x=[\colsep=2pt\begin{array}{ccc} x_1 & x_2 & x_3 \end{array}]^\t$ and $w=[\colsep=2pt\begin{array}{ccc} w_1 & w_2 & w_3 \end{array}]^\t$.
The output
\[
y(k) = \left[\begin{array}{c}
    x_2(k) \\ x_3(k) \\ x_1(k)+x_2(k)+x_3(k)
\end{array}\right] + \left[\begin{array}{c}
    v_1(k)  \\ v_2(k) \\ v_3(k) 
\end{array}\right]
\]
where $v_1(k),v_2(k),v_3(k)$ are the bounded measurement noise variables. The last output $x_1(k)+x_2(k)+x_3(k)\approx 1$ is due to the constant population assumption. For the SIR epidemic model, we need to measure $\gamma I$ for observability and identifiability \cite{hamelin2020}. However, since $\gamma$ is unknown, we need to measure at least two states to satisfy these properties.

We generate $Z$ as in \eqref{eq:data_sequences} using the true values of $\beta=0.3$ and $\gamma=0.1$, $x(0)=[\colsep=5pt\begin{array}{ccc} 0.9 & 0.09 & 0.01 \end{array}]^\t$, and $w(k) \in \zono{0, [\begin{array}{ccc} 0.001 & 0.001 & 0.001\end{array}]^\t}$ and $v(k) \in \zono{0, [\begin{array}{ccc} 0.001 & 0.001 & 0.001\end{array}]^\t}$. We use $Z$, without the knowledge of $\beta$ and $\gamma$, to find the constrained matrix zonotope $\mathcal{N}_{\Theta_p}$ and ensuring that $\Theta_p\in\mathcal{N}_{\Theta_p}$. First, we find $\mathcal{M}_{\Theta_p}$ using \eqref{eq:Msigma}, whose left and right bounds are given in Table~\ref{tab:matzono}. Second, as the structure of $\Theta_p$ is given in \eqref{eq:sirTheta_p}, we use this prior knowledge and impose it on the matrix zonotope $\mathcal{M}_{\Theta_p}$ by using the inequality \eqref{eq:sideinfo}, where 
\[
\bar{Y} = \left[\arraycolsep=3pt\begin{array}{cccr}
    1 & 0 & 0 & -0.5  \\
    0 & 1-0.3 & 0 & 0.5 \\
    0 & 0.3 & 0 & 1
\end{array}\right], \quad 
\bar{R} = \left[\arraycolsep=3pt\begin{array}{cccc}
    0 & 0 & 0 & 0.5  \\
    0 & 0.5 & 0 & 0.5 \\
    0 & 0.5 & 0 & 0
\end{array}\right]
\]
are the initial guess on $\Theta_p$ and the maximum uncertainty on $\bar{Y}$, and $\bar{Q}=I_{n_x}$. The matrix $\bar{R}$ is chosen as such by keeping into account the prior knowledge that $\beta,\gamma\in[0,1]$. From the lines \ref{ln:A1}--\ref{ln:N} of Algorithm~\ref{alg:PolySetBasedEstimation}, we finally obtain $\mathcal{N}_{\Theta_p}$, whose left and right bounds are given in Table~\ref{tab:matzono}. From $\mathcal{N}_{\Theta_p}$, we obtain $-0.69\leq-\beta\leq 0$, which gives $\beta\in[0,0.69]$, and ${0.63\leq 1-\gamma \leq 1.30}$, intersecting it with prior knowledge $\gamma\in[0,1]$ gives $\gamma\in[0,0.37]$. This validates Proposition~\ref{prop:sigmaM_p} as $\Theta_p$ with true $\beta=0.3$ and $\gamma=0.1$ is contained in $\mathcal{N}_{\Theta_p}$.

For the online phase, we first suppose that the region $\mathcal{X}_0$ for the initial condition is known, where it is guessed that $x_1(0)\in[0.8,1]$, $x_2(0)\in[0,0.2]$, and $x_3(0)\in[0,0.1]$. We first consider $\hat{\mathcal{R}}_0=\mathcal{X}_0$, then, at each time step $k$, using the noisy online measurements $y(k)$, we update the set estimate $\hat{\mathcal{R}}_{k}$ using the lines~\ref{ln:algI_i}--\ref{ln:measupdate} of Algorithm~\ref{alg:PolySetBasedEstimation}. These set-based estimation results are depicted in Figure~\ref{fig:threeresults}, where it is ensured that the true state $x(k)$ remains bounded by the constrained zonotope $\hat{\mathcal{R}}_k$. This validates Proposition~\ref{prop:con_zonotopes}.

\section{Conclusion} \label{sec:conc}

We presented a data-driven set-based estimator for polynomial systems. In the offline phase, a set of models is computed that is consistent with the experimental data and the noise bounds. In the online phase, the time update step involves propagating ahead of the estimated set using the set of consistent models. Then, the measurement update step computes the set-based estimate by intersecting the time update set with the set consistent with noisy measurements. We evaluate the proposed approach on SIR epidemics by estimating sets that bound the model parameters and the proportion of susceptible, infected, and recovered populations under bounded uncertainties and measurement noise.

Future investigations include the development of methods that further reduce the conservativeness of estimated sets without violating the formal guarantees. Moreover, observability conditions of polynomial systems need to be examined when the model is partially known.


\section*{Acknowledgement}

This work was supported by the Swedish Research Council, the Knut and Alice Wallenberg Foundation, the Democritus project on Decision-making in Critical Societal Infrastructures by Digital Futures, and the European Unions Horizon 2020 Research and Innovation program under the CONCORDIA cyber security project (GA No. 830927).


\bibliographystyle{ieeetr}
\bibliography{ref} 

\end{document}